\newtheorem{theorem}{Theorem}
\newtheorem{corollary}{Corollary}
\newtheorem{lemma}{Lemma}
\def\be{\begin{equation}}
\def\ee{\end{equation}}
\def\ba{\begin{array}}
\def\ea{\end{array}}
\newcommand\btd{\raise 2pt \hbox{$\hat\bigtriangledown$}\hskip 1.5pt}
\newcommand\bt{\raise 2pt \hbox{$\bigtriangledown$}\hskip 1.5pt}
\begin{document}

\title{Necessary conditions for classifying $\mathbf m$-separability of multipartite entanglements}

\author{Wen Xu$^{1}$,   Chuan-Jie Zhu$^{2,3}$,   Zhu-Jun Zheng$^{1}$  and Shao-Ming Fei$^{4,5}$ \\[10pt]
\small $^{1}$ Department of Mathematics, South China University of Technology,
Guangzhou 510640, P.R. China\\
\small $^{2}$ College of Mathematics and Physics Science, \\
\small Hunan University of Arts and Science, Changde 415000, P.R. China\\
\small $^{3}$ Department of Physics, Renmin University of China, Beijing 100872, P.R. China\\
\small $^{4}$ School of Mathematical Sciences, Capital Normal University, Beijing 100048, P.R.China\\
\small $^{5}$ Max-Planck-Institute for Mathematics in the Sciences, 04103 Leipzig, Germany}
\date{}
\maketitle

\begin{abstract}
We study the norms of the Bloch vectors for arbitrary $n$-partite quantum states. A tight upper bound of the norms is derived for $n$-partite systems with different individual dimensions. These upper bounds are used to deal with the separability problems. Necessary conditions are presented for $\mathbf m$-separable states in $n$-partite quantum systems. Based on the upper bounds, classification of multipartite entanglement is illustrated with detailed examples.

\medskip
\textbf{Keywords}
Bloch vectors $\cdot$ Norm $\cdot$ Upper bounds $\cdot$ Separability
\end{abstract}

\section{Introduction}

Quantum entanglement is a remarkable resource in the theory of quantum information, with numerous applications in
quantum information processing, secure communication and channel protocols \cite{Eke,Ben,Bra}. A multipartite quantum state that is not separable with respect to any bipartition is said to be genuinely multipartite entangled \cite{Cir,Hub,Vic1}. Genuinely multipartite entangled states have significant advantages in quantum tasks compared with biseparable ones \cite{Hor}.

The notion of genuine multipartite entanglement (GME) was introduced in \cite{Hor}. Let $H_{i}^{d_{i}}, i=1,\cdots n,$ denote $d_{i}$-dimensional Hilbert spaces. An n-partite state $\rho\in H^{d_{1}}_{1}\otimes H^{d_{2}}_{2}\otimes\cdots\otimes H^{d_{n}}_{n}$ can be expressed as $\rho=\sum\limits_{i}p_{i}|\psi_{i}\rangle\langle \psi_{i}|,$ where $\sum\limits_{i}p_{i}=1,  0< p_{i}\leq1, |\psi_{i}\rangle\in H^{d_{1}}_{1}\otimes H^{d_{2}}_{2}\otimes\cdots\otimes H^{d_{n}}_{n}$ are normalized pure states. $\rho$ is biseparable if $|\psi_{i}\rangle~(i=1,\cdots,n)$ can be expressed as one of the forms: $|\psi_{i}\rangle=|\psi_{i}^{j_{1}\cdots j_{k-1}}\rangle\otimes|\psi_{i}^{j_{k}\cdots j_{n}}\rangle,$ where $|\psi_{i}^{j_{1}\cdots j_{k-1}}\rangle$ and $|\psi_{i}^{j_{k}\cdots j_{n}}\rangle$ denote pure states in $H^{j_{1}}_{i}\otimes\cdots \otimes H^{j_{k-1}}_{i}$ and $H^{j_{k}}_{i}\otimes\cdots \otimes H^{j_{n}}_{i}$, respectively, $j_{1}\neq\cdots \neq j_{n}\in\{1, \cdots, n\}$, $k=2,...,n-1$.
Otherwise, $\rho$ is called genuine multipartite entangled. Correspondingly, we say that the state
$\rho=\sum\limits_{i}p_{i}|\psi_{i}\rangle\langle \psi_{i}|$ is $\mathbf m$-separable
if all the $|\psi_{i}\rangle$ are tensor products of m vectors in the subspaces of $H^{d_{1}}_{1}\otimes H^{d_{2}}_{2}\otimes\cdots\otimes H^{d_{n}}_{n}$.

Any quantum state has Bloch representation in multipartite high-dimensional quantum systems. By using the norms of the Bloch vectors, the density operators in lower dimensions were discussed in \cite{Jak,Kim}. For bipartite and multipartite quantum states, separable conditions have been presented in \cite{Vic2,Vic3,Has,Wang}. The norms of the Bloch vectors for any qudit quantum states with subsystems less than or equal to four have been investigated in \cite{Li}. Then in \cite{Pop}, T\u{a}n\u{a}sescu et al. generalized the result of \cite{Li} for
four-partite quantum systems, which provided an upper bound on the entanglement measure given by the Bloch vector norm and a necessary algebraic condition for separability of a general multi-partite quantum system under any arbitrary partition function. Two multipartite entanglement measures for n-qubit and n-qudit pure states are given in \cite{Joa1,Joa2}. In \cite{Yu}, the sum of relative isotropic strengths of any three-qudit state over $d$-dimensional Hilbert space cannot exceed $d-1$ have been discussed, and the trade-off relations and monogamy-like relations of the
sum of spin correlation strengths for pure three- and four-partite systems are derived. Some sufficient or necessary conditions of GME were presented in \cite{van,Zhang,Shen}. To the detection of GME, the common criterion is the entanglement witnesses \cite{Hub,Ban,Jun,Wu}. In \cite{Vic1}, the norms of the Bloch vectors give rise to a general framework to detect different classes of GME for arbitrary dimensional quantum systems. Recently, based on the norms of the correlation tensors, Zhao et al. \cite{Zhao} have been studied the separability criteria by matrix method and necessary conditions of separability for multipartite systems are given under arbitrary partition.

In this paper, we study the Bloch representations of quantum states with arbitrary number of subsystems.
In Section 2, we present tight upper bounds for the norms of Bloch vectors in n-qudit quantum states. These upper bounds are then used to derive tight upper bounds for entanglement measures in \cite{Joa1,Joa2}. The upper bounds of the norms of the Bloch vectors are useful to study the separability. In Section 3, we investigate different subclasses of bi-separable states in n-partite systems. Necessary conditions for $\mathbf m$-separability and complete classification of n-partite quantum systems are presented.

\section{Upper bounds of the norms of Bloch vectors}

Let $\lambda_{i}$, $i=1, \cdots, d^{2}-1,$ denote the generators of the special unitary group $SU(d)$, which satisfy $\lambda_{i}^{\dag}=\lambda_{i}$, $Tr(\lambda_{i})=0$, $Tr(\lambda_{i}\lambda_{j})=2\delta_{ij}$.
The following theorem gives the general result for n-partite quantum states.

\begin{theorem}
Let $\rho\in H^{d_{1}}_{1}\otimes H^{d_{2}}_{2}\otimes\cdots\otimes H^{d_{n}}_{n}$ $(n\geq3, 2\leq d_{1}\leq d_{2}\leq \cdots\leq d_{n}, d_{n}\leq d_{1}\cdots d_{n-1})$ be an n-partite quantum state. We have
\begin{equation}
\left\|\mathbf T^{(12\cdots n)}\right\|^{2}\leq 2^{n}\left\{1-\frac{\sum\limits_{1\leq i_{1}<\cdots<i_{n-1}\leq n}d_{i_{1} }\cdots d_{i_{n-1}}-\sum\limits_{i=1}^{n}d_{i}+(n-2)d_{n}}{(n-2)d_{1}\cdots d_{n-1}d_{n}^{2}} \right\}.
\end{equation}
\end{theorem}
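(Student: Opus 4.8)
\medskip
\noindent\textbf{Proof proposal.} The plan is to reduce to pure states and then exploit the purity relations written in the Bloch basis. For a fixed product Bloch basis each component of $\mathbf T^{(12\cdots n)}$ is a linear functional of $\rho$, so $\rho\mapsto\|\mathbf T^{(12\cdots n)}\|^{2}$ is a convex function on the compact convex set of density matrices; its maximum is attained at an extreme point, i.e.\ at a pure state, so it suffices to treat $\rho=|\psi\rangle\langle\psi|$. Expanding $\rho$ in the product Bloch basis and using $Tr(\lambda_{i}\lambda_{j})=2\delta_{ij}$ and $Tr(I_{d})=d$, the equality $Tr(\rho^{2})=1$ turns into the master relation
\[
\sum_{\emptyset\neq S\subseteq\{1,\cdots,n\}}2^{|S|}\Big(\prod_{j\notin S}d_{j}\Big)\|\mathbf T^{(S)}\|^{2}=\Big(\prod_{k}d_{k}\Big)^{2}-\prod_{k}d_{k}.
\]
The subset $S=\{1,\cdots,n\}$ enters with coefficient $2^{n}$, so this already gives $2^{n}\|\mathbf T^{(12\cdots n)}\|^{2}=(\prod_{k}d_{k})^{2}-\prod_{k}d_{k}$ minus a nonnegative combination of the lower-order norms $\|\mathbf T^{(S)}\|^{2}$, $\emptyset\neq S\subsetneq\{1,\cdots,n\}$. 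Simply discarding those terms is not tight even for qubits, so they have to be controlled rather than dropped.

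I would control them through the marginals. For a bipartition $\{k\}\,|\,\overline{\{k\}}$, with $\overline{\{k\}}$ the complement, a pure state satisfies $Tr(\rho_{\overline{\{k\}}}^{2})=Tr(\rho_{k}^{2})\ge 1/d_{k}$, and the hypothesis $d_{n}\le d_{1}\cdots d_{n-1}$ is precisely what forces $d_{k}\le\prod_{j\neq k}d_{j}$ for every $k$, so the value $1/d_{k}$ is not obstructed by the Schmidt rank. Expanding each $Tr(\rho_{\overline{\{k\}}}^{2})$ the same way produces, for each $k$, a linear relation among the norms $\|\mathbf T^{(S)}\|^{2}$ with $k\in S$. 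The idea is then to add the master relation to a carefully weighted combination of (some of) these bipartition relations so that all intermediate norms $\|\mathbf T^{(S)}\|^{2}$ cancel, and finally to substitute the extremal values $Tr(\rho_{k}^{2})=1/d_{k}$; the surviving terms should assemble into $\sum_{i_{1}<\cdots<i_{n-1}}d_{i_{1}}\cdots d_{i_{n-1}}-\sum_{i}d_{i}+(n-2)d_{n}$ over $(n-2)d_{1}\cdots d_{n-1}d_{n}^{2}$, with the factor $n-2$ entering either because only $n-2$ of the bipartitions are actually needed or from the normalization of the combination.

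The step I expect to be the main obstacle is exactly this combination and the attendant optimization. The intermediate norms $\|\mathbf T^{(S)}\|^{2}$ are not free parameters; they are pinned by the reduced-state purities, which for a genuine $n$-partite pure state are in turn not mutually independent. One therefore has to (i) choose the weights so that every intermediate $\|\mathbf T^{(S)}\|^{2}$ ends up with a coefficient of the correct (nonpositive) sign, so that it can be discarded or is forced to vanish at the extremum, and (ii) check that the extremal purity profile $Tr(\rho_{k}^{2})=1/d_{k}$ for the relevant $k$ is simultaneously attainable by a single pure state, which is needed both to make the optimization rigorous and to justify the tightness asserted in the abstract (a GHZ-type state adapted to unequal local dimensions is the natural candidate). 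Establishing (i) uniformly in $n$, rather than just in the small cases $n=3,4$ already in the literature, together with the explicit extremal construction, is where the real work lies.
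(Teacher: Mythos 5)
Your scaffolding does match the paper's: reduce to pure states (the paper handles mixed states at the end via convexity of the Frobenius norm, equivalent to your extreme-point argument), expand $Tr(\rho^{2})=1$ in the product Bloch basis, and bring in the equal-purity relations of the marginals of a pure state. But the proof is not actually there: you yourself defer ``this combination and the attendant optimization'' to future work, and that combination \emph{is} the content of the theorem. Concretely, the paper neither cancels all intermediate norms nor substitutes $Tr(\rho_{k}^{2})=1/d_{k}$. It uses only the $n$ single-party-versus-rest relations $Tr(\rho_{j_{1}}^{2})=Tr(\rho_{j_{2}\cdots j_{n}}^{2})$, summed with weights $1/d_{j_{1}}$, to solve for the two-body aggregate $x_{2}=\sum_{|S|=2}\bigl(\prod_{j\notin S}d_{j}\bigr)^{-1}\|\mathbf T^{(S)}\|^{2}$ and substitute it back into the purity identity. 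After that single elimination, the coefficient of each surviving $k$-body aggregate is $-\tfrac{k-2}{2^{k}(n-2)}$ for $3\leq k\leq n-1$, and the one-body contribution collects into $-\tfrac{1}{2(n-2)}\sum_{i}\bigl(\tfrac{1}{d_{i}}-\tfrac{1}{\prod_{j\neq i}d_{j}}\bigr)\|\mathbf T^{(i)}\|^{2}$; both are manifestly nonpositive (the latter is exactly where $d_{n}\leq d_{1}\cdots d_{n-1}$ enters), so these terms are simply discarded rather than cancelled. What remains is a constant, bounded by the purely arithmetic inequality $\sum_{i=1}^{n-1}(d_{n}-d_{i})\bigl(1/d_{i}^{2}-1/\prod_{j}d_{j}\bigr)\geq 0$. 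None of this needs the extremal purity profile $Tr(\rho_{k}^{2})=1/d_{k}$ to be attainable by one state --- which is fortunate, since simultaneous maximal mixedness of all marginals is generally impossible and your own worry (ii) would otherwise be a genuine obstruction to your route.

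A smaller but real error: your master relation has the wrong coefficients. From $Tr(\lambda_{i}\lambda_{j})=2\delta_{ij}$ one gets $Tr(\rho^{2})=\sum_{S}\frac{1}{2^{|S|}\prod_{j\notin S}d_{j}}\|\mathbf T^{(S)}\|^{2}$, so clearing denominators yields $\sum_{\emptyset\neq S}2^{-|S|}\bigl(\prod_{j\in S}d_{j}\bigr)\|\mathbf T^{(S)}\|^{2}=\prod_{k}d_{k}-1$; the top term carries $2^{-n}\prod_{k}d_{k}$, not $2^{n}$, and the right-hand side is not $(\prod_{k}d_{k})^{2}-\prod_{k}d_{k}$. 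This does not change the strategic picture, but it does mean the starting identity you would be manipulating is incorrect as written. To close the gap you should carry out the elimination explicitly as above (or exhibit your own weight system with all intermediate coefficients provably nonpositive for every $n$), and prove the closing arithmetic inequality in the $d_{i}$; tightness is a separate exhibit-a-state exercise and is not needed for the inequality itself.
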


\begin{proof}
$\rho$ has the Bloch representation:
\begin{equation}
\begin{split}
\rho=&\frac{1}{d_{1}\cdots d_{n}}I_{d_{1}}\otimes\cdots\otimes I_{d_{n}}+\frac{1}{2}\bigg(\frac {1}{d_{2}
\cdots d_{n}}\sum_{i_{1}=1}^{d_{1}^{2}-1}t_{i_{1}}^{(1)}\lambda_{i_{1}}\otimes I_{d_{2}}\otimes\cdots\otimes I_{d_{n}} + \cdots \\
&+ \frac {1}{d_{1}\cdots d_{n-1}}\sum_{i_{n}=1}^{d_{n}^{2}-1}t_{i_{n}}^{(n)}I_{d_{1}} \otimes I_{d_{2}}\otimes\cdots\otimes \lambda_{i_{n}}\bigg)+\cdots\\
&+ \frac{1}{2^{n}}\sum_{k=1}^{n}\sum_{i_{k}=1}^{d_{k}^{2}-1} t_{i_{1}\cdots i_{n}}^{(1\cdots n)}\lambda_{i_{1}}\otimes\lambda_{i_{2}}\otimes \cdots \otimes \lambda_{i_{n}},
\end{split}
\end{equation}
where $I_{d_{i}}$ denotes the $d_{i}\times d_{i}$ identity matrix $,i=1,\cdots,n$,
$t_{i_{1}}^{(1)}= Tr(\rho\lambda_{i_{1}}\otimes I_{d_{2}}\otimes \cdots \otimes I_{d_{n}}), \cdots,
t^{(j_{1}\cdots j_{k})}_{i_{j_{1}}\cdots i_{j_{k}}}=Tr(\rho\lambda_{i_{j_{1}}}\otimes\cdots\otimes\lambda_{i_{j_{k}}} \otimes I_{d_{j_{k+1}}}\otimes\cdots \otimes I_{d_{j_{n}}}),\cdots, t_{i_{1}\cdots i_{n}}^{(1\cdots n)}=Tr(\rho\lambda_{i_{1}}\otimes\lambda_{i_{2}} \otimes \cdots \otimes \lambda_{i_{n}})$ and $\mathbf T^{(1)}, \cdots,\mathbf T^{(j_{1}\cdots j_{k})},\cdots, \mathbf T^{(1\cdots n)}$ are the vectors~(tensors)~with the elements $t_{i_{1}}^{(1)}, \cdots, t_{i_{j_{1}}\cdots i_{j_{k}}}^{(j_{1}\cdots j_{k})},\cdots,t_{i_{1}\cdots i_{n}}^{(1\cdots n)}~(1\leq j_{1}<\cdots<j_{k}\leq n,i_{s}=1, \cdots, d_{s}^{2}-1, s=1,\cdots,n)$, respectively.

Set
$$
\begin{array}{c}
\left\|\mathbf T^{(1)}\right\|^{2}= \sum\limits_{i_{1}=1}^{d_{1}^{2}-1}\left(t_{i_{1}}^{(1)}\right)^{2}, \\
\cdots,\\
\left\|\mathbf T^{(j_{1}\cdots j_{k})}\right\|^{2}=\sum\limits_{s=j_{1}}^{j_{k}}\sum\limits_{i_{s}=1}^{d_{s}^{2}-1}\left(t_{i_{j_{1}}\cdots i_{j_{k}}}^{(j_{1}\cdots j_{k})}\right)^{2},\\
\cdots,\\
\left\|\mathbf T^{(1\cdots n)}\right\|^{2}=\sum\limits_{s=1}^{n}\sum\limits_{i_{s}=1}^{d_{s}^{2}-1}\left(t_{i_{1}\cdots i_{n}}^{(1\cdots n)}\right)^{2},\\
\\
x_{1}=\frac{1}{d_{2}\cdots d_{n}}\left\|\mathbf T^{(1)}\right\|^{2}+\cdots+\frac{1}{d_{1}\cdots d_{n-1}}\left\|\mathbf T^{(n)}\right\|^{2}, \\
\\
x_{2}=\frac{1}{d_{3}\cdots d_{n}}\left\|\mathbf T^{(12)}\right\|^{2}+\cdots+\frac{1}{d_{1}\cdots d_{n-2}}\left\|\mathbf T^{(n-1, n)}\right\|^{2}, \\
\cdots,\\
x_{n}=\left\|\mathbf T^{(12\cdots n)}\right\|^{2}.
\end{array}
$$
For a pure state $\rho=|\psi\rangle\langle \psi|$, one has $Tr(\rho^{2})=1$, namely,
\begin{equation}\label{3}
Tr(\rho^{2})=\frac{1}{d_{1}\cdots d_{n}}+\frac{1}{2}x_{1}+\frac{1}{2^{2}}x_{2}+\cdots+\frac{1}{2^{n}}x_{n}=1.
\end{equation}
In the following we denote $\rho_{j_{1}}, \rho_{j_{2}\cdots j_{n}}$ the reduced density matrix for the subsystem $H_{j_1}^{d_{j_1}}$ and $H_{j_2}^{d_{j_2}}\otimes\cdots \otimes H_{j_n}^{d_{j_n}}$, $j_{1}\neq \cdots\neq j_{n}\in\{1, 2, \cdots, n\}$. One computes that,

\begin{equation*}
\rho_{j_{1}}=\frac{1}{d_{j_{1}}}I_{d_{j_{1}}}+
\frac{1}{2}\sum\limits_{i_{j_{1}}=1}^{d_{j_{1}}^{2}-1}t_{i_{j_{1}}}^{(j_{1})}\lambda_{i_{j_{1}}},
\end{equation*}
\begin{equation*}
\begin{split}
\rho_{j_{2}\cdots j_{n}}=&\frac{1}{d_{j_{2}}\cdots d_{j_{n}}}I_{d_{j_{2}}}\otimes \cdots \otimes I_{d_{j_{n}}}+\\
&\frac{1}{2}\bigg(\frac{1}{d_{j_{3}}\cdots d_{j_{n}}}\sum\limits_{i_{j_{3}}=1}^{d_{j_{3}}^{2}-1}t_{i_{j_{3}}}^{(j_{3})}\lambda _{i_{j_{3}}}\otimes \cdots \otimes I_{d_{j_{n}}}+\cdots\\
&+\frac{1}{d_{j_{2}}\cdots d_{j_{n-1}}}\sum\limits_{i_{j_{n}}=1}^{d_{j_{n}}^{2}-1}t_{i_{j_{n}}}^{(j_{n})}I_{d_{j_{2}}} \otimes \cdots \otimes \lambda_{i_{j_{n}}}\bigg) +\cdots\\
&+
\frac{1}{2^{n-1}}\sum_{s=2}^{n}\sum_{i_{j_{s}}=1}^{d_{j_{s}}^{2}-1} t_{i_{j_{2}}\cdots i_{j_{n}}}^{(j_{2}\cdots j_{n})}\lambda_{i_{j_{2}}}\otimes \cdots \otimes \lambda_{i_{j_{n}}}.
\end{split}
\end{equation*}
For a pure state $\rho=|\psi\rangle\langle \psi|$, we have
\begin{equation}
Tr(\rho_{j_{1}}^{2})=Tr(\rho_{j_{2}\cdots j_{n}}^{2}),
\end{equation}
which holds for any $j_{1}\neq\cdots \neq j_{n}\in{\{1, 2, \cdots, n\}}$. We obtain
\begin{equation}
\sum_{j_{1}=1}^{n}\frac{1}{d_{j_{1}}}Tr(\rho_{j_{1}}^{2})=\sum_{j_{1}=1}^{n}\frac{1}{d_{j_{1}}}
Tr(\rho_{j_{2}\cdots j_{n}}^{2}).
\end{equation}
Hence we get
\begin{equation*}
\begin{split}
\sum_{i=1}^{n}\frac{1}{d_{i}^{2}}+\frac{1}{2}\sum_{i=1}^{n}\frac{1}{d_{i}}\left\|\mathbf T^{(i)}\right\|^{2}=&\frac{n}{d_{1}\cdots d_{n}}+\frac{n-1}{2}x_{1}+\\
&\frac{n-2}{2^{2}}x_{2}+\cdots+\frac{1}{2^{n-1}}x_{n-1}.
\end{split}
\end{equation*}
Then
\begin{equation}\label{6}
\begin{split}
\frac{1}{2^{2}}x_{2}=&\frac{1}{n-2}\left(\sum_{i=1}^{n}\frac{1}{d_{i}^{2}}-\frac{n}{d_{1}\cdots d_{n}}\right)+\\
&\frac{1}{2(n-2)}\left(\sum_{i=1}^{n}\frac{1}{d_{i}}\left\|\mathbf T^{(i)}\right\|^{2}-
(n-1)x_{1}\right)\\
&-\cdots-\frac{1}{2^{n-1}(n-2)}x_{n-1}.
\end{split}
\end{equation}
Substituting (\ref{6}) into (\ref{3}), we get
\begin{equation*}
\begin{split}
\frac{1}{2^{n}}x_{n}=&\left[1-\frac{1}{n-2}\left(\sum_{i=1}^{n}\frac{1}{d_{i}^{2}}-\frac{2}{d_{1}\cdots d_{n}}\right)\right]-\\
&\frac{1}{2(n-2)}\left(\sum_{i=1}^{n}\frac{1}{d_{i}}\left\|\mathbf T^{(i)}\right\|^{2}-x_{1}\right)-\cdots-\frac{n-3}{2^{n-1}(n-2)}x_{n-1}\\
\leq &1-\frac{\sum\limits_{1\leq i_{1}<\cdots<i_{n-1}\leq n}d_{i_{1} }\cdots d_{i_{n-1}}-\sum\limits_{i=1}^{n}d_{i}+(n-2)d_{n}}{(n-2)d_{1}\cdots d_{n-1}d_{n}^{2}}
\end{split}
\end{equation*}
where the inequality holds for
\begin{equation}\label{7}
\sum\limits_{i=1}^{n}\frac{1}{d_{i}^{2}}-\frac{2}{d_{1}\cdots d_{n}}\geq
\frac{\sum\limits_{1\leq i_{1}<\cdots<i_{n-1}\leq n}d_{i_{1} }\cdots d_{i_{n-1}}-\sum\limits_{i=1}^{n}d_{i}+(n-2)d_{n}}{d_{1}\cdots d_{n-1}d_{n}^{2}}
\end{equation}
\begin{equation}\label{8}
\sum\limits_{i=1}^{n}\frac{1}{d_{i}}\left\|\mathbf T^{(i)}\right\|^{2}-x_{1}=\sum\limits_{i=1}^{n}\left(\frac{1}{d_{i}}-\frac{1}{d_{1}\cdots d_{i-1}d_{i+1}\cdots d_{n}}\right)\left\|\mathbf T^{(i)}\right\|^{2}\geq0,
\end{equation}
the inequation (\ref{8}) holds for $d_{i}\leq d_{1}\cdots d_{i-1}d_{i+1}\cdots d_{n}, i=1,\cdots,n$.
And $x_{3},\cdots,x_{n-1}\geq0$. In fact, the inequation (\ref{7}) holds if and only if
\begin{equation}\label{9}
\sum\limits_{i=1}^{n-1}(d_{n}-d_{i})(\frac{1}{d_{i}^{2}}-\frac{1}{\prod\limits_{i=1}^{n}d_{i}})
\geq0.
\end{equation}
Since $d_{n}\geq d_{i}$, and $d_{i}^{2}\leq\prod\limits_{i=1}^{n}d_{i},i=1,\cdots,n-1$, so the inequation (\ref{9}) holds, which is equivalent to hold for inequation (\ref{7}). Hence we get
\begin{equation*}
 x_{n}\leq2^{n}\left\{1-\frac{\sum\limits_{1\leq i_{1}<\cdots<i_{n-1}\leq n}d_{i_{1} }\cdots d_{i_{n-1}}-\sum\limits_{i=1}^{n}d_{i}+(n-2)d_{n}}{(n-2)d_{1}\cdots d_{n-1}d_{n}^{2}} \right\},~~~n\geq 3.
\end{equation*}

Then we consider a mixed state $\rho$ with ensemble representation
$\rho=\sum\limits_{i}p_{i}|\psi_{i}\rangle\langle \psi_{i}|$, where\\$\sum\limits_{i}p_{i}=1, 0<p_{i}\leq1$, by the convexity of the Frobenius norm one derives
\begin{equation*}
\begin{split}
\left\|\mathbf T^{(12\cdots n)}(\rho)\right\|^{2}=&\left\|\sum\limits_{i}p_{i}\mathbf T^{(12\cdots n)}(|\psi_{i}\rangle\langle \psi_{i}|)\right\|^{2}\\
\leq&\sum\limits_{i}p_{i}\left\|\mathbf T^{(12\cdots n)}(|\psi_{i}\rangle\langle \psi_{i}|)\right\|^{2}\\
\leq&2^{n}\left\{1-\frac{\sum\limits_{1\leq i_{1}<\cdots<i_{n-1}\leq n}d_{i_{1} }\cdots d_{i_{n-1}}-\sum\limits_{i=1}^{n}d_{i}+(n-2)d_{n}}{(n-2)d_{1}\cdots d_{n-1}d_{n}^{2}} \right\},~~~n\geq 3,
\end{split}
\end{equation*}
which ends the proof.
\end{proof}

{\sf Remark 1:} Theorem 1 is a generalization of Proposition 1 and Proposition 2 given in \cite{Zhao}.  When $n=3, 2\leq d_{1}\leq d_{2}\leq d_{3}, d_{3}\leq d_{1}d_{2}$, we obtain that
\begin{equation*}
\left\|\mathbf T^{(123)}\right\|^{2}\leq 8\left(1-\frac{d_{1}d_{2}+d_{1}d_{3}+d_{2}d_{3}-d_{1}-d_{2}}{d_{1}d_{2}d_{3}^{2}}\right)
\end{equation*}
which coincide with the upper bound in \cite{Zhao}. When $n=4, 2\leq d_{1}\leq d_{2}\leq d_{3}\leq d_{4}, d_{4}\leq d_{1}d_{2}d_{3}$, we obtain that
\begin{equation*}
\left\|\mathbf T^{(1234)}\right\|^{2}\leq 16\left(1-\frac{d_{1}d_{2}d_{3}+d_{1}d_{2}d_{4}+d_{1}d_{3}d_{4}+d_{2}d_{3}d_{4}-d_{1}-d_{2}-d_{3}+d_{4}}{d_{1}d_{2}d_{3}d_{4}^{2}}\right)
\end{equation*}
which also coincide with the upper bound in \cite{Zhao}.

As a special case, consider $ d_{1}=\cdots=d_{n}=d$ in Theorem 1. We have

\begin{corollary}
Let $\rho\in H^{d}_{1}\otimes H^{d}_{2}\otimes\cdots\otimes H^{d}_{n}$ $(n\geq3, d\geq2)$ be an n-qudit quantum state. We have
\begin{equation}
\left\|\mathbf T^{(12\cdots n)}\right\|^{2}\leq \frac{2^{n}\left[(n-2)d^{n}-nd^{n-2}+2\right]}{(n-2)d^{n}}.
\end{equation}
\end{corollary}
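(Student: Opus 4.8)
The plan is to obtain the Corollary as an immediate specialization of Theorem~1 to the case $d_1=d_2=\cdots=d_n=d$, followed by elementary simplification. First I would check that the hypotheses of Theorem~1 are met: the chain $2\le d_1\le\cdots\le d_n$ becomes $d\le d\le\cdots\le d$ with $d\ge 2$, and the side condition $d_n\le d_1\cdots d_{n-1}$ becomes $d\le d^{\,n-1}$, which holds because $n\ge 3$ forces $d^{\,n-1}\ge d^2\ge 2d>d$. So Theorem~1 applies verbatim and gives the bound $2^n\{1-N/D\}$ with $N=\sum_{1\le i_1<\cdots<i_{n-1}\le n}d_{i_1}\cdots d_{i_{n-1}}-\sum_{i=1}^n d_i+(n-2)d_n$ and $D=(n-2)d_1\cdots d_{n-1}d_n^2$.

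Next I would evaluate $N$ and $D$ at $d_i=d$. The one combinatorial point is that the sum $\sum_{1\le i_1<\cdots<i_{n-1}\le n}d_{i_1}\cdots d_{i_{n-1}}$ ranges over the $\binom{n}{n-1}=n$ subsets of $\{1,\dots,n\}$ of size $n-1$, and each summand equals $d^{\,n-1}$, so the sum is $n\,d^{\,n-1}$. Likewise $\sum_{i=1}^n d_i=nd$, $(n-2)d_n=(n-2)d$, and $D=(n-2)d^{\,n-1}d^2=(n-2)d^{\,n+1}$. Hence $N=n\,d^{\,n-1}-nd+(n-2)d=n\,d^{\,n-1}-2d$, where the cancellation $-nd+(n-2)d=-2d$ is the only arithmetic worth noting.

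Finally I would cancel a common factor $d$ from numerator and denominator to write $N/D=\bigl(n\,d^{\,n-2}-2\bigr)/\bigl((n-2)d^{\,n}\bigr)$, then combine $1-N/D$ over the common denominator $(n-2)d^{\,n}$ to get $\bigl[(n-2)d^{\,n}-n\,d^{\,n-2}+2\bigr]/\bigl[(n-2)d^{\,n}\bigr]$. Multiplying by $2^n$ yields exactly $\dfrac{2^n\bigl[(n-2)d^{\,n}-n\,d^{\,n-2}+2\bigr]}{(n-2)d^{\,n}}$, as claimed. There is no genuine obstacle here: the proof is a one-line substitution into Theorem~1 plus routine algebra, the only care needed being the count $\binom{n}{n-1}=n$ of terms in the symmetric sum and the displayed cancellation.
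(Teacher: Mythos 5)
Your proposal is correct and follows exactly the paper's route: the paper derives Corollary~1 simply by setting $d_{1}=\cdots=d_{n}=d$ in Theorem~1, and your verification of the hypothesis $d\le d^{\,n-1}$, the count of the $n$ terms in the symmetric sum, and the cancellation $-nd+(n-2)d=-2d$ reproduces the stated bound $\frac{2^{n}\left[(n-2)d^{n}-nd^{n-2}+2\right]}{(n-2)d^{n}}$ without any gap.
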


{\sf Remark 2:} The upper bound of Corollary 1 is the same as in \cite{Pop} and \cite{Zhao}. And Corollary 1 is the generalization of the results of \cite{Li}. When $n=3,4$, the results of Corollary 1 reduce to the ones in \cite{Li} and \cite{Pop}. And when $n=3$, the upper bound of Corollary 1 is $\frac{8(d-1)^{2}(d+2)}{d^{3}}$, which tighter than the upper bound of Corollary 2.2 given in \cite{Yu}.

The Bloch vectors are used to define a valid entanglement measure in \cite{Joa1,Joa2} as follows. For an n-qudit pure state, the entanglement measure is defined as:
\begin{equation}
E_{\mathbf T}(|\psi\rangle)=\left(\frac{d}{2}\right)^{\frac{n}{2}}\left\|\mathbf T^{(1\cdots n)}\right\|-\left(\frac{d(d-1)}{2}\right)^{\frac{n}{2}},
\end{equation}
where $\mathbf T^{(1\cdots n)}$ is defined as a vector with elements $t_{i_{1}\cdots i_{n}}^{(1\cdots n)}=Tr(\rho\lambda_{i_{1}}\otimes\lambda_{i_{2}} \otimes \cdots \otimes \lambda_{i_{n}})$.
Our results can give rise to an upper bound of the entanglement:

\begin{corollary}
For any n-qudit pure state $\rho\in H^{d}_{1}\otimes H^{d}_{2}\otimes\cdots\otimes H^{d}_{n}$ $(n\geq3, d\geq2)$, the entanglement measure has the upper bounds:
\begin{equation}
E_{\mathbf T}(|\psi\rangle)\leq\left(\frac{d}{2}\right)^{\frac{n}{2}}
\left[\sqrt{d^{n}-\frac{n}{n-2}d^{n-2}+\frac{2}{n-2}}-(d-1)^{\frac{n}{2}}\right],
\end{equation}
which coincide with the upper bound in \cite{Pop}.
\end{corollary}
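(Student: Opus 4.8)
Since $E_{\mathbf T}(|\psi\rangle)$ is an affine, increasing function of the single nonnegative scalar $\|\mathbf T^{(1\cdots n)}\|$, it suffices to substitute the estimate of Corollary~1 into its definition; no convexity or ensemble argument is needed here, because the measure is defined only on pure states.

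First I would invoke Corollary~1 (the equal-dimension case) to get $\|\mathbf T^{(1\cdots n)}\|^{2}\le \dfrac{2^{n}[(n-2)d^{n}-nd^{n-2}+2]}{(n-2)d^{n}}$. Multiplying through by $(d/2)^{n}$ and using the identity $\dfrac{(n-2)d^{n}-nd^{n-2}+2}{n-2}=d^{n}-\dfrac{n}{n-2}d^{n-2}+\dfrac{2}{n-2}$ turns this into $\left(\tfrac{d}{2}\right)^{n}\|\mathbf T^{(1\cdots n)}\|^{2}\le d^{n}-\tfrac{n}{n-2}d^{n-2}+\tfrac{2}{n-2}$, and taking a square root gives $\left(\tfrac{d}{2}\right)^{n/2}\|\mathbf T^{(1\cdots n)}\|\le \sqrt{d^{n}-\tfrac{n}{n-2}d^{n-2}+\tfrac{2}{n-2}}$.

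Next I would subtract $\left(\tfrac{d(d-1)}{2}\right)^{n/2}=\left(\tfrac{d}{2}\right)^{n/2}(d-1)^{n/2}$ from both sides; by definition of $E_{\mathbf T}$ this yields $E_{\mathbf T}(|\psi\rangle)\le \sqrt{d^{n}-\tfrac{n}{n-2}d^{n-2}+\tfrac{2}{n-2}}-\left(\tfrac{d}{2}\right)^{n/2}(d-1)^{n/2}$. Finally, since $d\ge 2$ we have $\left(\tfrac{d}{2}\right)^{n/2}\ge 1$, so the right-hand side does not decrease if the factor $\left(\tfrac{d}{2}\right)^{n/2}$ is inserted in front of the radical; factoring $\left(\tfrac{d}{2}\right)^{n/2}$ out of the two terms then produces exactly the stated inequality, and the agreement with \cite{Pop} is immediate by inspection.

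This is a routine corollary, so there is no genuine obstacle. The only points demanding a little care are the bookkeeping of the powers of $d/2$ in the second step and the one-line monotonicity argument in the last step, which is precisely what turns the sharper estimate $\sqrt{\,\cdot\,}-\left(\tfrac{d}{2}\right)^{n/2}(d-1)^{n/2}$ into the form written in the corollary.
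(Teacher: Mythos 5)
Your proposal is correct, and it follows the only route the paper could intend: the paper states this corollary with no proof at all, presenting it as an immediate consequence of Corollary~1 and the definition of $E_{\mathbf T}$. The one substantive point you handle --- and which the paper silently glosses over --- is that direct substitution does \emph{not} produce the printed formula. Rescaling Corollary~1 gives $\left(\frac{d}{2}\right)^{n/2}\left\|\mathbf T^{(1\cdots n)}\right\|\le\sqrt{d^{n}-\frac{n}{n-2}d^{n-2}+\frac{2}{n-2}}$, hence the sharper bound $E_{\mathbf T}(|\psi\rangle)\le\sqrt{d^{n}-\frac{n}{n-2}d^{n-2}+\frac{2}{n-2}}-\left(\frac{d(d-1)}{2}\right)^{n/2}$, in which the prefactor $\left(\frac{d}{2}\right)^{n/2}$ multiplies only the subtracted term. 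To reach the form in the corollary one must deliberately loosen this by inserting $\left(\frac{d}{2}\right)^{n/2}\ge 1$ in front of the (nonnegative) radical, exactly as you do; for $d>2$ the two bounds genuinely differ (e.g.\ $n=4$, $d=3$ gives $-1$ for your intermediate bound versus $9$ for the printed one). So your argument is sound and in fact proves slightly more than the statement: the corollary as printed is not the tightest consequence of Corollary~1, and your intermediate inequality is the better bound to record.
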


\section{The Necessary conditions for $\mathbf m$-separable states}

Now we study separability problems of n-partite quantum systems based on the upper bounds of the norms of Bloch vectors. Let $\rho\in H^{d_{1}}_{1}\otimes H^{d_{2}}_{2}\otimes\cdots\otimes H^{d_{n}}_{n}$, $n\geq3$, $2\leq d_{1}\leq d_{2}\leq \cdots\leq d_{n}$. If $\rho$ can be written as $\rho=\sum\limits_{i}p_{i}|\psi_{i}\rangle\langle \psi_{i}|$, where $\sum\limits_{i}p_{i}=1$, $0< p_{i}\leq 1$, $|\psi_{i}\rangle$ is one of the following sets: $\left\{|\phi_{j_{1}}\rangle\otimes|\phi_{j_{2}\cdots j_{n}}\rangle\right\}, \cdots, \left\{|\phi_{1}\rangle\otimes\cdots\otimes|\phi_{n}\rangle\right\}, \left\{|\phi_{j_{1}j_{2}}\rangle\otimes|\phi_{j_{3}\cdots j_{n}}\rangle\right\}, \cdots,$ where $j_{1}\neq\cdots\neq j_{n}\in\{1, \cdots n\}.$ Then $\rho$ is called $(1,n-1)$ separable, $\cdots, \underbrace{(1,\cdots,1)}_{n}$ separable, $(2,n-2)$ separable, $\cdots$, respectively.

\begin{lemma}
Let $\rho_{j}\in H_{j}^{d_{j}}$ $(j=1,\cdots, n, d_{j}\geq2)$ be the reduced density operator of $\rho$. We have
\begin{equation}\label{l1}
\left\|\mathbf T^{(j)}\right\|^{2}\leq\frac{2(d_{j}-1)}{d_{j}}.
\end{equation}
\end{lemma}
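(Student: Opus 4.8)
The plan is to compute $\mathrm{Tr}(\rho_j^2)$ directly from the Bloch representation of the single-party reduced density operator $\rho_j$ and then invoke the purity bound $\mathrm{Tr}(\rho_j^2)\le 1$. Recall that the excerpt already records, in the proof of Theorem 1, the explicit form
\begin{equation*}
\rho_{j}=\frac{1}{d_{j}}I_{d_{j}}+\frac{1}{2}\sum_{i_{j}=1}^{d_{j}^{2}-1}t_{i_{j}}^{(j)}\lambda_{i_{j}}.
\end{equation*}
First I would square this and take the trace, using the stated orthogonality relations $\mathrm{Tr}(\lambda_{i_j})=0$ and $\mathrm{Tr}(\lambda_{i_j}\lambda_{k_j})=2\delta_{i_j k_j}$. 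The cross terms vanish because of tracelessness, so $\mathrm{Tr}(\rho_j^2)=\frac{1}{d_j}+\frac{1}{4}\sum_{i_j}(t_{i_j}^{(j)})^2\cdot 2=\frac{1}{d_j}+\frac{1}{2}\|\mathbf T^{(j)}\|^2$.

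Next, since $\rho_j$ is a density operator (the reduction of a state is a state), we have $\mathrm{Tr}(\rho_j^2)\le 1$, with equality iff $\rho_j$ is pure. Combining gives $\frac{1}{d_j}+\frac{1}{2}\|\mathbf T^{(j)}\|^2\le 1$, i.e. $\|\mathbf T^{(j)}\|^2\le 2\bigl(1-\tfrac{1}{d_j}\bigr)=\frac{2(d_j-1)}{d_j}$, which is exactly (\ref{l1}). One should also note that the bound holds for $\rho$ a general (possibly mixed) state, not just a pure one, since the argument only uses $\mathrm{Tr}(\rho_j^2)\le 1$; purity of the global state is never needed here.

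There is essentially no obstacle: this lemma is a textbook fact about single-qudit Bloch vectors, and the only mild point to be careful about is bookkeeping the factor of $1/2$ coming from the normalization $\mathrm{Tr}(\lambda_i\lambda_j)=2\delta_{ij}$ together with the $\tfrac12$ in front of the sum in the Bloch form of $\rho_j$. If one wanted an alternative route, one could instead cite the known fact that $\|\mathbf T^{(j)}\|^2 = 2\,\mathrm{Tr}(\rho_j^2) - 2/d_j$ and that $\mathrm{Tr}(\rho_j^2)\in[1/d_j,1]$, but the direct trace computation is the cleanest and is self-contained given what the paper has already set up.
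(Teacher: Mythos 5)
Your proposal is correct and follows exactly the paper's own argument: write $\rho_j$ in Bloch form, use the trace orthogonality relations to get $\mathrm{Tr}(\rho_j^2)=\frac{1}{d_j}+\frac{1}{2}\|\mathbf T^{(j)}\|^2$, and invoke $\mathrm{Tr}(\rho_j^2)\le 1$. The only difference is that you spell out the trace computation that the paper leaves implicit, which is fine.
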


\begin{proof}
$\rho_{j}$ has the Bloch representation:
\begin{equation}
\rho_{j}=\frac{1}{d_{j}}I_{d_{j}}+\frac{1}{2}\sum\limits_{i_{j}=1}^{d_{j}^{2}-1}t_{i_{j}}^{(j)}\lambda_{i_{j}},
\end{equation}
where $t_{i_{j}}^{(j)}= Tr(\rho_{j}\lambda_{i_{j}})$, $\mathbf T^{(j)}$ is a vector with entries $t_{i_{j}}^{(j)}$, $j=1,\cdots, n$, $i_{j}=1,\cdots,d_{j}^{2}-1$.
Since $Tr(\rho_{j}^{2})\leq1$, i.e. $\frac{1}{d_{j}}+\frac{1}{2}\left\|\mathbf T^{(j)}\right\|^{2}\leq1$, one obtains (\ref{l1}).
\end{proof}

\begin{lemma}
Let $\rho_{jk}\in H_{j}^{d_{j}}\otimes H_{k}^{d_{k}}$ $(1\leq j<k\leq n, 2\leq d_{j}\leq d_{k})$ be the reduced density operator of $\rho$. We have
\begin{equation}
\left\|\mathbf T^{(jk)}\right\|^{2}\leq\frac{2^{2}(d_{j}^{2}-1)}{d_{j}^{2}}.
\end{equation}
\end{lemma}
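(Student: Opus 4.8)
The plan is to run, for two parties, the same scheme as in the proof of Theorem~1, now applied to the bipartite reduced operator $\rho_{jk}$; this lemma is essentially the $n=2$ instance of that argument. As a first step I would reduce to the pure case: since each component $t^{(jk)}_{i_j i_k}=Tr(\rho_{jk}\,\lambda_{i_j}\otimes\lambda_{i_k})$ is linear in the state and the Frobenius norm is convex, writing an arbitrary bipartite density matrix as $\rho_{jk}=\sum_\alpha q_\alpha|\phi_\alpha\rangle\langle\phi_\alpha|$ and applying the triangle inequality followed by convexity of the square gives $\|\mathbf T^{(jk)}(\rho_{jk})\|^2\le\sum_\alpha q_\alpha\|\mathbf T^{(jk)}(|\phi_\alpha\rangle\langle\phi_\alpha|)\|^2$, exactly as in the last displayed chain of the proof of Theorem~1. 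Hence it suffices to establish the bound when $\rho_{jk}=|\psi\rangle\langle\psi|$ is pure.

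For a pure $\rho_{jk}$ I would insert the two-party Bloch expansion (the specialisation to $\{j,k\}$ of the expansion for $\rho_{j_2\cdots j_n}$ written above) and use $Tr(\lambda_i)=0$, $Tr(\lambda_i\lambda_{i'})=2\delta_{ii'}$ to obtain
\begin{equation*}
Tr(\rho_{jk}^2)=\frac{1}{d_jd_k}+\frac{1}{2d_k}\|\mathbf T^{(j)}\|^2+\frac{1}{2d_j}\|\mathbf T^{(k)}\|^2+\frac14\|\mathbf T^{(jk)}\|^2=1 .
\end{equation*}
The extra ingredient is that the two single-party marginals of a pure bipartite state share the same spectrum, so $Tr(\rho_j^2)=Tr(\rho_k^2)$ (the $n=2$ case of the identity $Tr(\rho_{j_1}^2)=Tr(\rho_{j_2\cdots j_n}^2)$ used in Theorem~1). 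Combining this with $Tr(\rho_j^2)=\tfrac{1}{d_j}+\tfrac12\|\mathbf T^{(j)}\|^2$ and $Tr(\rho_k^2)=\tfrac{1}{d_k}+\tfrac12\|\mathbf T^{(k)}\|^2$ from the proof of Lemma~1 gives $\|\mathbf T^{(k)}\|^2=\|\mathbf T^{(j)}\|^2+\tfrac{2}{d_j}-\tfrac{2}{d_k}$. Substituting this into the displayed trace identity, the two $\tfrac{1}{d_jd_k}$ terms cancel and I get
\begin{equation*}
\tfrac14\|\mathbf T^{(jk)}\|^2=1-\frac{1}{d_j^2}-\frac12\left(\frac{1}{d_j}+\frac{1}{d_k}\right)\|\mathbf T^{(j)}\|^2\le 1-\frac{1}{d_j^2},
\end{equation*}
using only $\|\mathbf T^{(j)}\|^2\ge0$. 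Multiplying by $4$ gives $\|\mathbf T^{(jk)}\|^2\le\frac{2^2(d_j^2-1)}{d_j^2}$ for pure states, and the convexity step of the first paragraph extends it to every $\rho_{jk}$.

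The one point that carries the argument is the equal-purity identity $Tr(\rho_j^2)=Tr(\rho_k^2)$: using only $Tr(\rho_{jk}^2)\le1$ (as in Lemma~1) one would merely get the weaker estimate $\|\mathbf T^{(jk)}\|^2\le\frac{4(d_jd_k-1)}{d_jd_k}$, which depends on both dimensions, whereas the identity lets the dependence collapse to the smaller one; eliminating $\|\mathbf T^{(k)}\|^2$ rather than $\|\mathbf T^{(j)}\|^2$ is what makes $d_j$ — the smaller dimension under the hypothesis $d_j\le d_k$ — appear. Everything else, namely the Bloch bookkeeping of $Tr(\rho_{jk}^2)$ and the convexity reduction, is routine and parallel to the proofs of Theorem~1 and Lemma~1. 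I would also note that the bound is tight, being saturated by a maximally entangled pure state supported on $H_j^{d_j}$ and a $d_j$-dimensional subspace of $H_k^{d_k}$, for which $\mathbf T^{(j)}=0$.
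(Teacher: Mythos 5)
Your proposal is correct and follows essentially the same route as the paper's own proof: the same Bloch expansion of $Tr(\rho_{jk}^{2})$, the same equal-purity identity $Tr(\rho_{j}^{2})=Tr(\rho_{k}^{2})$ to eliminate $\left\|\mathbf T^{(k)}\right\|^{2}$ in favour of the smaller dimension $d_{j}$, and the same convexity reduction to the pure case; your identity $\tfrac14\left\|\mathbf T^{(jk)}\right\|^{2}=1-\tfrac{1}{d_{j}^{2}}-\tfrac12\bigl(\tfrac{1}{d_{j}}+\tfrac{1}{d_{k}}\bigr)\left\|\mathbf T^{(j)}\right\|^{2}$ is exactly the paper's displayed equality rewritten. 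The added remarks on tightness and on why the weaker bound $Tr(\rho_{jk}^{2})\leq1$ would not suffice are correct but not needed.
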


\begin{proof}
$\rho_{jk}$ has the Bloch representation:
\begin{equation}
\begin{split}
\rho_{jk}=&\frac{1}{d_{j}d_{k}}I_{d_{j}}\otimes I_{d_{k}}+
\frac{1}{2d_{k}}\sum\limits_{i_{j}=1}^{d_{j}^{2}-1}t_{i_{j}}^{(j)}\lambda_{i_{j}}\otimes I_{d_{k}}+\\
&\frac{1}{2d_{j}}\sum\limits_{i_{k}=1}^{d_{k}^{2}-1}t_{i_{k}}^{(k)}I_{d_{j}}\otimes\lambda_{i_{k}}
+\frac{1}{4}\sum\limits_{i_{j}=1}^{d_{j}^{2}-1}\sum\limits_{i_{k}=1}^{d_{k}^{2}-1}t_{i_{j}i_{k}}^{(jk)}
\lambda_{i_{j}}\otimes\lambda_{i_{k}},
\end{split}
\end{equation}
where $t_{i_{j}}^{(j)}= Tr(\rho_{jk}\lambda_{i_{j}}\otimes I_{d_{k}})$, $t_{i_{k}}^{(k)}= Tr(\rho_{jk} I_{d_{j}}\otimes\lambda_{i_{k}})$,
$t_{i_{j}i_{k}}^{(jk)}= Tr(\rho_{jk}\lambda_{i_{j}}\otimes\lambda_{i_{k}})$. $\mathbf T^{(j)}$, $\mathbf T^{(k)}$ and $\mathbf T^{(jk)}$ are vectors with entries $t_{i_{j}}^{(j)},~t_{i_{k}}^{(k)}$ and $t_{i_{j}i_{k}}^{(jk)}$ $(1\leq j<k\leq n, i_{j}=1,\cdots,d_{j}^{2}-1, i_{k}=1,\cdots,d_{k}^{2}-1)$.
Set
$$
\begin{array}{cc}
\left\|\mathbf T^{(j)}\right\|^{2}=\sum\limits_{i_{j}=1}^{d_{j}^{2}-1}\left(t_{i_{j}}^{(j)}\right)^{2}, \\[2mm]
\left\|\mathbf T^{(k)}\right\|^{2}=\sum\limits_{i_{k}=1}^{d_{k}^{2}-1}\left(t_{i_{k}}^{(k)}\right)^{2}, \\[2mm]
\left\|\mathbf T^{(jk)}\right\|^{2}=\sum\limits_{i_{j}=1}^{d_{j}^{2}-1}\sum\limits_{i_{k}=1}^{d_{k}^{2}-1}\left(t_{i_{j}i_{k}}^{(jk)}\right)^{2}.
\end{array}
$$
For a pure state $\rho_{jk}=|\psi\rangle\langle \psi|$, one has $Tr(\rho_{jk}^{2})=1$, namely,
\begin{equation}
Tr(\rho_{jk}^{2})=\frac{1}{d_{j}d_{k}}+\frac{1}{2d_{k}}\left\|\mathbf T^{(j)}\right\|^{2}+ \frac{1}{2d_{j}}\left\|\mathbf T^{(k)}\right\|^{2}+\frac{1}{4}\left\|\mathbf T^{(jk)}\right\|^{2}=1.
\end{equation}
Let $\rho_{j}$ and $\rho_{k}$ be the reduced density matrices with respect to the subsystems $1\leq j<k\leq n$. Since for a pure state $\rho_{jk}$, $Tr(\rho_{j}^{2})=Tr(\rho_{k}^{2})$, i.e. $\frac{1}{d_{j}}+\frac{1}{2}\left\|\mathbf T^{(j)}\right\|^{2}=\frac{1}{d_{k}}+\frac{1}{2}\left\|\mathbf T^{(k)}\right\|^{2}$. Therefore, we get
\begin{equation}
\begin{split}
\left\|\mathbf T^{(jk)}\right\|^{2}=& \frac{2^{2}(d_{j}^{2}-1)}{d_{j}^{2}}-\frac{2(d_{j}+d_{k})}{d_{j}d_{k}}
\left\|\mathbf T^{(j)}\right\|^{2}\\
\leq&\frac{2^{2}(d_{j}^{2}-1)}{d_{j}^{2}}.
\end{split}
\end{equation}

Now we consider a mixed state $\rho_{jk}$ with ensemble representation
$\rho_{jk}=\sum\limits_{i}p_{i}|\psi_{i}\rangle\langle \psi_{i}|$, where
$\sum\limits_{i}p_{i}=1$, $0<p_{i}\leq1$, by the convexity of the Frobenius norm one derives
\begin{equation*}
\begin{split}
\left\|\mathbf T^{(jk)}(\rho_{jk})\right\|^{2}=&\left\|\sum\limits_{i}p_{i}\mathbf T^{(jk)}(|\psi_{i}\rangle\langle \psi_{i}|)\right\|^{2}\\
\leq&\sum\limits_{i}p_{i}\left\|\mathbf T^{(jk)}(|\psi_{i}\rangle\langle \psi_{i}|)\right\|^{2}\\
\leq&\frac{2^{2}(d_{j}^{2}-1)}{d_{j}^{2}},
\end{split}
\end{equation*}
which ends the proof.
\end{proof}

For the $n$-partite quantum state $\rho\in H^{d_{1}}_{1}\otimes H^{d_{2}}_{2}\otimes\cdots\otimes H^{d_{n}}_{n},n\geq3,2\leq d_{1}\leq\cdots \leq d_{n}$, consider the $\mathbf m$-partition of $n$-qudit quantum state $\rho$, we denote $\mathbf m=(k_{1},\cdots,k_{m})$, where $\sum\limits_{s=1}^{m}k_{s}=n,1\leq k_{1}\leq \cdots\leq k_{m}\leq n-1$ and $n_{j}=\sum\limits_{s=1}^{j}k_{s},1\leq j\leq m, 1\leq n_{j}\leq n$. By Theorem 1, $d_{n_{j}}\leq d_{n_{j-1}+1}~d_{n_{j-1}+2}\cdots d_{n_{j}-1}$ if $3\leq k_{j}\leq n-1, 1\leq j\leq m$. Moreover, denote the following:
\begin{equation}\label{21}
\begin{array}{cc}
a_{1}=\frac{2\left(d_{1}-1\right)}{d_{1}},\\
\cdots,\\
a_{n_{p}}=\frac{2\left(d_{n_{p}}-1\right)}{d_{n_{p}}},\\
\\
a_{n_{b}}=\frac{2^{2}\left(d_{n_{b}-1}^{2}-1\right)}{d_{n_{b}-1}^{2}},\\
\cdots,\\
a_{n_{b+(q-1)}}=\frac{2^{2}\left(d_{n_{b+(q-1)}-1}^{2}-1\right)}{d_{n_{b+(q-1)}-1}^{2}}\\
\cdots,\\
a_{n_{c}}=2^{t}\left\{1-\frac{\sum\limits d_{i_{n_{c}-(t-1)}}\cdots d_{i_{n_{c}-1}}-\sum\limits_{i=0}^{t-1}d_{n_{c}+i}~+(n-2)d_{n_{c}}}{(t-2)d_{n_{c}-(t-1)}\cdots d_{n_{c}-1}d_{n_{c}}^{2} }\right\},\\
\cdots,\\
a_{n_{c+(s-1)}}=2^{t}\left\{1-\frac{\sum\limits d_{i_{n_{c+(s-1)}-(t-1)}}\cdots d_{i_{n_{c+(s-1)}-1}}-\sum\limits_{i=0}^{t-1}d_{n_{c}+(s-1)+i}+(n-2)d_{n_{c+(s-1)}}}{(t-2)d_{n_{c+(s-1)}-(t-1)}\cdots d_{n_{c+(s-1)}-1}d_{n_{c+(s-1)}}^{2} }\right\},~~~t\geq3.
\end{array}
\end{equation}
where $p+q+\cdots+s=m,~p+2q+\cdots+ts=n,1\leq p,b,q,\cdots,c,s\leq m$.

Then from Lemmas 1, 2 and Theorem 1, we have
\begin{equation*}
\begin{array}{cc}
 \left\|\mathbf T^{(1)}\right\|^{2}\leq a_{1},\\
 \cdots,\\
 \left\|\mathbf T^{(n_{p})}\right\|^{2}\leq a_{n_{p}},\\
 \\
 \left\|\mathbf T^{\left(n_{b}-1,n_{b}\right)}\right\|^{2}\leq a_{n_{b}},\\
 \cdots,\\
 \left\|\mathbf T^{\left(n_{b+(q-1)}-1,n_{b+(q-1)}\right)}\right\|^{2}\leq a_{n_{b+(q-1)}},\\
 \\
 \left\|\mathbf T^{\left(n_{c}-(t-1),\cdots,n_{c}\right)}\right\|^{2}\leq a_{n_{c}},\\
 \cdots,\\
 \left\|\mathbf T^{\left(n_{c+(s-1)}-(t-1),\cdots,n_{c+(s-1)}\right)}\right\|^{2}\leq a_{n_{c+(s-1)}},~~~t\geq3.
\end{array}
\end{equation*}
where $\mathbf T^{(u)}$ is a vector with the entries $t_{i_{u}}^{(u)}~(u=1,\cdots,n_{p}, \;i_{u}=1,\cdots,d_{u}^{2}-1).$
$\mathbf T^{(xy)}$ is a vector with entries $t_{i_{x}i_{y}}^{(xy)}~ (x=n_{b}-1,\cdots,n_{b+(q-1)}-1,
y=n_{b},\cdots,n_{b+(q-1)}, i_{x}=1,\cdots,d_{x}^{2}-1, i_{y}=1,\cdots,d_{y}^{2}-1).$ $\cdots$. $\mathbf T^{(x_{1},\cdots,x_{t})}$ is a vector with entries $t_{i_{x_{1}}\cdots i_{x_{t}}}^{(x_{1}\cdots x_{t})} ~(x_{1}=n_{c}-(t-1),\cdots,n_{c+(s-1)}-(t-1), \cdots, x_{t}=n_{c},\cdots,n_{c+(s-1)}, i_{x_{1}}=1,\cdots,d_{x_{1}}^{2}-1,\cdots,
i_{x_{t}}=1,\cdots,d_{x_{t}}^{2}-1).$

The following theorem gives the necessary conditions of $\mathbf m$-separability.
\begin{theorem}
Let $\rho\in H^{d_{1}}_{1}\otimes H^{d_{2}}_{2}\otimes\cdots\otimes H^{d_{n}}_{n}$ $(n\geq3, 2\leq d_{1}\leq d_{2}\leq \cdots\leq d_{n})$ be an n-partite quantum state. If $\rho$ is $\mathbf m$-separable we have
\begin{equation}\label{th2}
\left\|\mathbf T^{(12\cdots n)}\right\|^{2}\leq\prod\limits_{f=1}^{n_{p}}a_{f}
\prod\limits_{g=n_{b}}^{n_{b+(q-1)}}a_{g}\cdots\prod\limits_{h=n_{c}}^{n_{c+(s-1)}}a_{h},
\end{equation}
where $\mathbf m=(k_{1},\cdots,k_{m})$, $\sum\limits_{s=1}^{m}k_{s}=n,1\leq k_{1}\leq \cdots\leq k_{m}\leq n-1$ and $n_{j}=\sum\limits_{s=1}^{j}k_{s},1\leq j\leq m, 1\leq n_{j}\leq n$, $d_{n_{j}}\leq d_{n_{j-1}+1}~d_{n_{j-1}+2}\cdots d_{n_{j}-1}$ if $3\leq k_{j}\leq n-1, 1\leq j\leq m$. $a_{f},a_{g},a_{h} ~(f=1, \cdots, n_{p},g=n_{b},\cdots,n_{b+(q-1)},h=n_{c},\cdots,n_{c+(s-1)})$ are given in (\ref{21}).
\end{theorem}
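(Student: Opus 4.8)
The plan is to reduce the $\mathbf m$-separable case to a single pure product state, exploit the multiplicativity of the full correlation tensor under a tensor‑product splitting of the Hilbert space, and then bound each tensor factor by the three estimates already at hand (Lemma 1, Lemma 2 and Theorem 1). First I would note that, by the definition of $\mathbf m$-separability, one may write $\rho=\sum_{i}p_{i}|\psi_{i}\rangle\langle\psi_{i}|$ with each $|\psi_{i}\rangle=\bigotimes_{s=1}^{m}|\phi_{s}^{(i)}\rangle$ a tensor product across the $m$ blocks $B_{s}=\{n_{s-1}+1,\dots,n_{s}\}$ (with $n_{0}=0$). Since each entry $t_{i_{1}\cdots i_{n}}^{(1\cdots n)}=Tr(\rho\,\lambda_{i_{1}}\otimes\cdots\otimes\lambda_{i_{n}})$ depends linearly on $\rho$ and the squared Frobenius norm is convex, the chain $\|\mathbf T^{(12\cdots n)}(\rho)\|^{2}\le\sum_{i}p_{i}\|\mathbf T^{(12\cdots n)}(|\psi_{i}\rangle\langle\psi_{i}|)\|^{2}$ holds exactly as in the final step of the proof of Theorem 1. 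Hence it suffices to establish the bound for one pure product state $|\psi\rangle=\bigotimes_{s=1}^{m}|\phi_{s}\rangle$ compatible with the partition $\{B_{1},\dots,B_{m}\}$.

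For such a state I would observe that $\lambda_{i_{1}}\otimes\cdots\otimes\lambda_{i_{n}}$ itself splits as a tensor product of operators over the blocks, so that $t_{i_{1}\cdots i_{n}}^{(1\cdots n)}=\prod_{s=1}^{m}\langle\phi_{s}|\big(\bigotimes_{u\in B_{s}}\lambda_{i_{u}}\big)|\phi_{s}\rangle$. Thus $\mathbf T^{(12\cdots n)}$ is the outer tensor product of the full correlation tensors $\mathbf T^{(B_{s})}$ of the pure block states $|\phi_{s}\rangle\langle\phi_{s}|$, and summing squares over all indices factorizes the norm: $\|\mathbf T^{(12\cdots n)}\|^{2}=\prod_{s=1}^{m}\|\mathbf T^{(B_{s})}\|^{2}$. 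I would then bound each factor according to the size of $B_{s}$: Lemma 1 if $|B_{s}|=1$, Lemma 2 if $|B_{s}|=2$, and Theorem 1 if $|B_{s}|=t\ge3$; the last is legitimate because $|\phi_{s}\rangle$ is then a genuine $t$-partite pure state whose dimensions $d_{n_{s}-(t-1)}\le\cdots\le d_{n_{s}}$ obey $d_{n_{s}}\le d_{n_{s-1}+1}\cdots d_{n_{s}-1}$ by the standing hypothesis on $\mathbf m$. In each case the resulting bound is precisely the corresponding $a$ in (\ref{21}); multiplying the $m$ estimates over the $p$ singleton blocks, the $q$ blocks of size $2$, $\dots$, and the $s$ blocks of size $t$ gives $\|\mathbf T^{(12\cdots n)}\|^{2}\le\prod_{f=1}^{n_{p}}a_{f}\prod_{g=n_{b}}^{n_{b+(q-1)}}a_{g}\cdots\prod_{h=n_{c}}^{n_{c+(s-1)}}a_{h}$, and the convexity step of the first paragraph transfers this to the mixed state $\rho$.

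The only genuinely new ingredient is the multiplicativity identity $\|\mathbf T^{(12\cdots n)}\|^{2}=\prod_{s}\|\mathbf T^{(B_{s})}\|^{2}$ for product states, and I expect this to be the main (though not deep) point to get right: one must keep track of the fact that the block tensor $\mathbf T^{(B_{s})}$ carries the $\lambda$'s of \emph{all} subsystems inside $B_{s}$ rather than the marginal ones, and that for $|B_{s}|\ge3$ the state fed into Theorem 1 is exactly $|\phi_{s}\rangle\langle\phi_{s}|$ with the dimensions of that block, so that the ordering hypothesis $d_{n_{s}}\le d_{n_{s-1}+1}\cdots d_{n_{s}-1}$ is precisely what makes the invocation valid. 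Everything else is the index bookkeeping that matches each block to its entry in (\ref{21}), together with the convexity argument already used twice in Section 2.
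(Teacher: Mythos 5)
Your proposal follows essentially the same route as the paper's own proof: factorize the full correlation tensor of a pure product state over the blocks of the partition, bound each block factor by Lemma 1, Lemma 2 or Theorem 1 according to its size (using the dimension hypothesis exactly where Theorem 1 requires it), and pass to mixed states by convexity of the squared Frobenius norm. The argument is correct and no further comment is needed.
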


\begin{proof}
If $\rho=|\psi\rangle\langle \psi|$ is an $\mathbf m$-separable pure state, where $|\psi\rangle\in H^{d_{1}}_{1}\otimes H^{d_{2}}_{2}\otimes\cdots\otimes H^{d_{n}}_{n}.$ Without lose of generality, assume that
\begin{equation*}
\begin{split}
|\psi\rangle=&|\phi_{1}\rangle\otimes\cdots\otimes|\phi_{n_{p}}\rangle\otimes\left|\phi_{n_{b}-1,n_{b}}\right\rangle\otimes\cdots
\otimes\left|\phi_{n_{b+(q-1)}-1,n_{b+(q-1)}}\right\rangle\otimes\cdots\\
&\otimes\left|\phi_{n_{c}-(t-1),\cdots,n_{c}}\right\rangle\otimes\cdots\otimes
\left|\phi_{n_{c+(s-1)}-(t-1),\cdots,n_{c+(s-1)}}\right\rangle.
\end{split}
\end{equation*}
We have
\begin{equation}
\begin{split}
t_{i_{1}\cdots i_{n}}^{\left(1\cdots n\right)}=&Tr\left(|\psi\rangle\langle \psi|\lambda_{i_{1}}\otimes\cdots\otimes\lambda_{i_{n}}\right)\\
=&Tr\left(|\phi_{1}\rangle\langle\phi_{1}|\lambda_{i_{1}}\right)\cdots
Tr\left(|\phi_{n_{p}}\rangle\langle\phi_{n_{p}}|\lambda_{i_{n_{p}}}\right)\cdot
\\
&Tr\left(\left|\phi_{n_{b}-1,n_{b}}\right\rangle\left\langle\phi_{n_{b}-1,n_{b}}\right|\lambda_{i_{n_{b}-1}}\otimes
\lambda_{i_{n_{b}}}\right)\cdots\\
&Tr\bigg(\left|\phi_{n_{b+(q-1)}-1,n_{b+(q-1)}}\right\rangle\left\langle\phi_{n_{b+(q-1)}-1,n_{b+(q-1)}}\right|\\
&\lambda_{i_{n_{b+(q-1)}-1,n_{b+(q-1)}}}\otimes
\lambda_{i_{n_{b+(q-1)}-1,n_{b+(q-1)}}}\bigg)\cdots\\
&Tr\bigg(\left|\phi_{n_{c}-(t-1),\cdots,n_{c}}\right\rangle\left\langle\phi_{n_{c}-(t-1),\cdots,n_{c}}\right|\\
&\lambda_{i_{n_{c}-(t-1)}}\otimes\cdots\otimes\lambda_{i_{n_{c}}}\bigg)\cdots
\\
&Tr\bigg(\left|\phi_{n_{c+(s-1)}-(t-1),\cdots,n_{c+(s-1)}}\right\rangle
\left\langle\phi_{n_{c+(s-1)}-(t-1),\cdots,n_{c+(s-1)}}\right|\\
&\lambda_{i_{n_{c+(s-1)}-(t-1)}}\otimes\cdots\otimes\lambda_{i_{n_{c+(s-1)}}}\bigg)\\
=&t_{i_{1}}^{\left(1\right)}\cdots t_{i_{n_{p}}}^{\left(n_{p}\right)}~t_{i_{n_{b}-1}i_{n_{b}}}^{\left(n_{b}-1,n_{b}\right)}\cdots t_{i_{n_{b+(q-1)}-1}i_{n_{b+(q-1)}}}^{\left(n_{b+(q-1)}-1,n_{b+(q-1)}\right)}\cdots\\
&t_{i_{n_{c}-(t-1)}\cdots i_{n_{c}}}^{\left(n_{c}-(t-1),\cdots,n_{c}\right)}\cdots t_{i_{n_{c+(s-1)}-(t-1)}\cdots i_{n_{c+(s-1)}}}^{\left(n_{c+(s-1)}-(t-1),\cdots,n_{c+(s-1)}\right)}.
\end{split}
\end{equation}
Thus
\begin{equation}
\begin{split}
\left\|\mathbf T^{(12\cdots n)}\right\|^{2}=&\left\|\mathbf T^{(1)}\right\|^{2}\cdots\left\|\mathbf T^{(n_{p})}\right\|^{2}\\
&\left\|\mathbf T^{(n_{b}-1,n_{b})}\right\|^{2}\cdots
\left\|\mathbf T^{(n_{b+(q-1)}-1,n_{b+(q-1)})}\right\|^{2}\cdots\\
&\left\|\mathbf T^{(n_{c}-(t-1),\cdots,n_{c})}\right\|^{2}\cdots\left\|\mathbf T^{(n_{c+(s-1)}-(t-1),\cdots,n_{c+(s-1)})}\right\|^{2}\\
\leq&\prod\limits_{f=1}^{n_{p}}a_{f}
\prod\limits_{g=n_{b}}^{n_{b+(q-1)}}a_{g}\cdots\prod\limits_{h=n_{c}}^{n_{c+(s-1)}}a_{h}.
\end{split}
\end{equation}

Then for any mixed state $\rho=\sum\limits_{k}p_{k}|\psi_{k}\rangle\langle \psi_{k}|\in H^{d_{1}}_{1}\otimes H^{d_{2}}_{2}\otimes\cdots\otimes H^{d_{n}}_{n}$, where $\sum\limits_{k}p_{k}=1$,
$0< p_{k}\leq1$,by the convexity of the Frobenius norm one derives
\begin{equation}
\begin{split}
\left\|\mathbf T^{(12\cdots n)}(\rho)\right\|^{2}=&\left\|\sum\limits_{k}p_{k}\mathbf T^{(12\cdots n)}(|\psi_{k}\rangle\langle \psi_{k}|)\right\|^{2}\\
\leq&\sum\limits_{k}p_{k}\left\|\mathbf T^{(12\cdots n)}(|\psi_{k}\rangle\langle \psi_{k}|)\right\|^{2}\\
\leq&\prod\limits_{f=1}^{n_{p}}a_{f}
\prod\limits_{g=n_{b}}^{n_{b+(q-1)}}a_{g}\cdots\prod\limits_{h=n_{c}}^{n_{c+(s-1)}}a_{h}.
\end{split}
\end{equation}
\end{proof}

{\sf Remark 3:} The upper bounds of Theorem 2 is a generalization of Theorem 2 given in \cite{Pop} and Theorem 7 given in \cite{Zhao}, respectively. Set $d_{1}=\cdots=d_{n}=d$, and $a_{1}=\cdots=a_{n_{p}}$, $a_{n_{b}}=\cdots=a_{n_{b+(q-1)}}=a_{2}$, $\cdots$, $a_{n_{c}}=\cdots=a_{n_{c+(s-1)}}=a_{t}$. Then
(\ref{th2}) gives rise to
\begin{equation}
\left\|\mathbf T^{(12\cdots n)}\right\|^{2}\leq a_{1}^{p}a_{2}^{q}\cdots a_{t}^{s}.
\end{equation}
which coincide with Theorem 2 in \cite{Pop} and Theorem 7 in \cite{Zhao}.

{\sf Remark 4:} Let $\rho \in H_{1}^{d_{1}}\otimes H_{2}^{d_{2}}\otimes H_{3}^{d_{3}}\otimes H_{4}^{d_{4}}$ be a four-partite quantum state. One has
\begin{equation}
\left\|\mathbf T^{(1234)}\right\|^{2}\leq
\begin{cases}
\frac{2^{4}(d_{1}-1)}{d_{1}}\left\{1-\frac{d_{2}d_{3}+d_{2}d_{4}+d_{3}d_{4}-2d_{4}}{d_{2}d_{3}d_{4}^{2}}\right\}, &
if\;\rho\;is\;(1,3)\;separable;\\
\\
\frac{2^{4}\left(d_{1}^{2}-1\right)\left(d_{3}^{2}-1\right)}{d_{1}^{2}d_{3}^{2}}, &if\;\rho\;is\;(2,2)\;separable;\\
\\
\frac{2^{4}\left(d_{1}-1\right)\left(d_{2}-1\right)\left(d_{3}^{2}-1\right)}{d_{1}d_{2}d_{3}^{2}}, &if\;\rho\;is\;(1,1,2)\;separable;\\
\\
\frac{2^{4}\prod\limits_{i=1}^{4}\left(d_{i}-1\right)}{\prod\limits_{i=1}^{4}d_{i}}, &if\;\rho\;is\;(1,1,1,1)\;separable.
\end{cases}
\end{equation}

The following two examples show that the upper bounds in Theorem 2 are nontrivial and are tight.

{\it Example 1:} Consider the quantum state $\rho \in H_{1}^{2}\otimes H_{2}^{2}\otimes H_{3}^{2}\otimes H_{4}^{2}\otimes H_{5}^{2}$,
\begin{equation}
\rho =x(|\psi\rangle\langle \psi|+|\varphi\rangle\langle \varphi|)+\frac{1-2x}{32}I_{32},
\end{equation}
where $|\psi\rangle=\frac{1}{\sqrt{2}}(|00000\rangle+|11111\rangle)$, $|\varphi\rangle=\frac{1}{2}(|00001\rangle+
|00010\rangle+|00100\rangle+|01000\rangle)$. Since $\left\|\mathbf T^{(12345)}\right\|^{2}=\sum\limits_{i_{1}, \cdots, i_{5}=1}^{3}\left(t_{i_{1}\cdots i_{5}}^{(1\cdots5)}\right)^{2}$, where $t_{i_{1}\cdots i_{5}}^{(1\cdots5)}=Tr(\rho\lambda_{i_{1}}\otimes\cdots\otimes\lambda_{i_{5}})$ are the entries of $\mathbf T^{(12345)}$, we have $\left\|\mathbf T^{(12345)}\right\|^{2}=20x^{2}.$ Thus for $\frac{3\sqrt{5}}{10}<x\leq\frac{\sqrt{15}}{5},$
$\rho$ is not $(1,4)$ or $(1,2,2)$ separable. For $\frac{\sqrt{15}}{5}<x\leq1,$ $\rho$ is not $(2,3)$
separable. For $\frac{\sqrt{5}}{5}<x\leq\frac{3\sqrt{5}}{10},$ $\rho$ is not $(1,1,3)$ separable. For
$\frac{\sqrt{15}}{10}<x\leq\frac{\sqrt{5}}{5},$ $\rho$ is not $(1,1,1,2)$ separable. For
$\frac{\sqrt{5}}{10}<x\leq\frac{\sqrt{15}}{10},$ $\rho$ is not $(1,1,1,1,1)$ separable.

{\it Example 2:} Consider the quantum state $\rho \in H_{1}^{2}\otimes H_{2}^{3}\otimes H_{3}^{4}\otimes H_{4}^{5}$,
\begin{equation}
\rho =x|\psi\rangle\langle \psi|+\frac{1-x}{120}I_{120},
\end{equation}
where $|\psi\rangle=\frac{1}{\sqrt{2}}(|0\rangle_{1}|0\rangle_{2}|0\rangle_{3}|4\rangle_{4}+
|1\rangle_{1}|0\rangle_{2}|0\rangle_{3}|0\rangle_{4})$, $|0\rangle_{1}:=[1,0]^{T}$, $|1\rangle_{1}:=[0,1]^{T}$,
$|0\rangle_{2}:=[1,0,0]^{T}$, $|0\rangle_{3}:=[1,0,0,0]^{T}$, $|0\rangle_{4}:=[1,0,0,0,0]^{T}$, $|4\rangle_{4}:=[0,0,0,0,1]^{T}$ (T is the transpose). Since $\left\|\mathbf T^{(1234)}\right\|^{2}=\sum\limits_{k=1}^{4}\sum\limits_{i_{k}=1}^{d_{k}^{2}-1}
\left(t_{i_{1}\cdots i_{4}}^{(1\cdots4)}\right)^{2}$, where $t_{i_{1}\cdots i_{4}}^{(1\cdots4)}=Tr(\rho\lambda_{i_{1}}\otimes\cdots\otimes\lambda_{i_{4}})$ are the entries of $\mathbf T^{(1234)}$, we can compute that $\left\|\mathbf T^{(1234)}\right\|^{2}=6x^{2}.$
Thus for $\frac{\sqrt{263}}{15}<x\leq\frac{\sqrt{30}}{4},$ $\rho$ is not $(1,3)$ separable. For $\frac{\sqrt{30}}{4}<x\leq1,$ $\rho$ is not $(2,2)$ separable. For $\frac{\sqrt{30}}{6}<x\leq\frac{\sqrt{263}}{15},$ $\rho$ is not $(1,1,2)$ separable. For $\frac{2\sqrt{30}}{15}<x\leq\frac{\sqrt{30}}{6},$ $\rho$ is not $(1,1,1,1)$ separable.

From the above results, we are able to classify the entanglement of n-partite quantum states by using the norms of the Bloch vector $\left\|\mathbf T^{(12\cdots n)}\right\|^{2}$. The upper bounds of $\left\|\mathbf T^{(12\cdots n)}\right\|^{2}$ can be used to identify the $\mathbf m$-separable n-partite quantum states, which include the fully separable states and the genuine multipartite entangled states as
special classes.

\section{Conclusion}
Classification and detection of quantum entanglement are basic and fundamental problems in theory of quantum entanglement. We have investigated the norms of the Bloch vectors for arbitrary n-partite quantum systems. Tight upper bounds of the norms have been derived, and used to derive tight upper bounds for entanglement measure defined by the norms of Bloch vectors. The upper bounds have a close relationship to the separability. Necessary conditions have been presented for $\mathbf m$-separable quantum states.
With these upper bounds a complete classification of n-partite quantum states has been obtained.
Our results may highlight further studies on the quantum entanglement.

\section*{ Acknowledgments} This work is supported by the NSF of China under
Grant Nos. 11571119 and 11675113, the Key Project of Beijing Municipal Commission of Education (Grant No.
KZ201810028042), and Beijing Natural Science Foundation (Z190005).
It is a pleasure to thank Jin-Wei Huang for helpful discussion.

\end{document}